\numberwithin{equation}{section}
\newcommand{\ii}{{\rm i}}
\newcommand{\ee}{{\rm e}}
\newcommand{\x}{{\rm x}}
\newtheorem{thm}{Theorem}
\newtheorem{lemma}[thm]{Lemma}
\theoremstyle{definition}
\begin{document}

\title{Semi-classical gravity in de Sitter spacetime and the cosmological constant}
\author{Benito A. Ju\'arez-Aubry\thanks{{\tt benito.juarez@iimas.unam.mx}}}
\affil{Departamento de F\'isica Matem\'atica \\
Instituto de Investigaciones en Matem\'aticas Aplicadas y en Sistemas, Universidad Nacional Aut\'onoma de M\'exico,\\A. Postal 70-543, Mexico City 045010, Mexico}
\date{\daymonthyear\today}

\maketitle

\centerline{Published in Phys.~Lett.~B, {\bf 797} (2019) 134912 \href{https://doi.org/10.1016/j.physletb.2019.134912}{doi:10.1016/j.physletb.2019.134912}
}

\begin{abstract}
We show that there exist solutions to the semi-classical gravity equations in de Sitter spacetime sourced by the renormalised stress-energy tensor of a free Klein-Gordon field. For the massless scalar, solutions exist for every possible value of the cosmological constant, provided that the curvature coupling parameter is chosen appropriately. In the massive case, imposing Wald's axioms for the renormalised stress-energy tensor, the mass of the field and the curvature coupling constraint the allowed values of $\Lambda$. For a massive, minimally coupled field, a ``small $\Lambda$" solution is found, fixed by the relation $m^2 \simeq 4.89707 \times 10^{12} \Lambda$. We emphasise that in this framework, the {\it old} cosmological constant problem in its standard formulation plays no r\^ole, in the sense that there are no bare-vs-physical values of $\Lambda$, for only the physical $\Lambda$ appears in the semi-classical equations, and the value that it is allowed to take is fixed or restricted by the equations themselves. We explain that there is an important relation between the cosmological constant problem and the violation of Wald's stress-energy renormalisation axioms.
\end{abstract}

\singlespacing

\section{Introduction}
\label{sec:Intro}

The standard model of cosmology, $\Lambda$CDM, is currently the most successful theoretical model for explaining the evolution of the universe as a whole from the CMB epoch to the current era. While there is no doubt about its phenomenological success, it has left room for  several puzzles in theoretical physics, including the so-called cosmological constant problem. 

The old cosmological constant problem is in its standard formulation a naturalness one, which posits that some bare cosmological constant, $\Lambda_{\rm bare}$, of the order of the very large energy density of the quantum fields of matter, should be renormalised with exquisite precision to the small, positive value that we observe, $\Lambda_{\rm ren}$. See e.g. the excellent references \cite{Weinberg:1988cp, Padilla:2015aaa, Martin:2012bt}.



In this letter, we study the semi-classical gravity equations in de Sitter spacetime for a free Klein-Gordon field. On the one hand, this is an interesting problem in its own right in the sense that few exact semi-classical gravity solutions are known. The author is aware of \cite{Dappiaggi:2008mm}, in which cosmology with a massive Klein-Gordon field with $\xi = 1/6$ exhibits a de Sitter phase. Due to the isometries of de Sitter spacetime, we are able to show the existence of a large number of solutions parametrised by the mass parameter of the Klein-Gordon field, $m^2$, and the curvature coupling parameter, $\xi$. On the other hand, the cosmological constant puzzle can be most naturally framed in the setting of semi-classical gravity, without necessarily making reference to Minkowskian mode expansions or energy cut-offs as in the classical statement on the problem \cite{Weinberg:1988cp, Padilla:2015aaa, Martin:2012bt}, while also offering interesting insights on the puzzle, including its connection with the violation of Wald's stress-energy renormalisation axioms.

In the semi-classical setting, as we shall see below, bare and renormalised quantities for $\Lambda$ simply do not appear in the same way that the bare stress-energy tensor plays no r\^ole in the equations. Instead, $\Lambda$ should take values that are consistent with solutions to the semi-classical Einstein field equations. The point that we shall make below, in the context of a simple Klein-Gordon model, is that in some cases the values of $\Lambda$ is highly restricted, and sometimes uniquely determined (up to renormalisation ambiguities), in terms of the parameters of the Klein-Gordon field, $m^2$ and $\xi$. Moreover, as we shall see, while the value of $\Lambda$ will be necessarily proportional to $m^2$ (in the massive case), the proportionality factor can account for several orders of magnitude in difference between the two parameters. In particular, in the case of a massive, minimally coupled field, one finds an interesting solution that fixes the ratio $m^2/\Lambda \sim 10^{12}$. This framework offers an interesting perspective to address the cosmological constant puzzle, by studying the semi-classical Einstein equations in a realistic model of our universe -- FLRW spacetime with the Standard Model of Particle Physics as matter content.

On this tone, we should clarify that an aim of this paper is not to obtain the value for the cosmological constant observed in our universe, but we do argue that in the framework of the semi-classical Einstein field equations, and imposing Wald's stress-energy renormalisation axioms, its peculiar, small, observational value is not at odds with any estimated would-be, large, unrenormalised value.

This letter is organised as follows: After this introduction, in Sec. \ref{sec:semiGR} we briefly review the elements of semi-classical gravity. In Sec. \ref{sec:deSitter} we specify to the problem of a Klein-Gordon field in de Sitter spacetime. An analysis of the ambiguities is made, where we show that some of them can be fixed in de Sitter, both algebraically and by imposing Wald's axioms.  In Sec. \ref{sec:Solutions}, exploiting the symmetries of de Sitter, we show that there exists solutions to the problem stated in Sec. \ref{sec:deSitter}, and that for such solutions the cosmological constant will (almost always and modulo ambiguities) be restricted to specific values in terms of the $m^2$ and $\xi$ parameters of the Klein-Gordon field. Our final remarks are made in Sec. \ref{sec:Remarks}, including comments on the old cosmological constant problem, and its relation to Wald's renormalisation axioms of the stress-energy tensor \cite{Wald:1995yp}.

\section{Semi-classical gravity preliminaries}
\label{sec:semiGR}

We wish to obtain solutions to the semi-classical Einstein field equations for a quantum Klein-Gordon field in de Sitter spacetime. The semi-classical gravity equations are
\begin{subequations}
\label{SemiGR}
\begin{align}
	& G_{ab} + \Lambda g_{ab} = 8\pi G_{\rm N} \langle \Psi | \hat T_{ab} \Psi \rangle, \label{SemiEinstein} \\
	& (\Box - m^2 - \xi R) \hat \Phi = 0, \label{SemiFields} 
\end{align}
\end{subequations}
where the classical geometry is sourced by the renormalised quantum stress-energy tensor of the matter field. Here, $\Lambda > 0$ is the cosmological constant. 

The quantum Klein-Gordon field is an operator-valued distribution, $f \mapsto \hat{\Phi}(f)$, for $f \in C_0^\infty(M)$, which is densely defined on the relevant Fock space of the theory, $\hat \Phi(f): \mathscr{D} \subset \mathscr{H} \to \mathscr{H}$. 
The set of field observables generates a unital operator $^*$-algebra, $\mathscr{A}_{\rm KG}$, generated by smeared fields and subject to the following relations: for $f, g \in C_0^\infty(M)$, (i) $f \mapsto \hat \Phi(f)$ is linear, (ii) $\hat \Phi(f)^* = \hat \Phi(f)$ the field is self-adjoint, (iii) $\hat \Phi ((\Box - m^2 - \xi R)f) = 0$, the field equation \eqref{SemiFields} holds by integration by parts twice and (iv) $[\hat \Phi(f), \hat{\Phi}(g)] = -\ii E(f,g) 1\!\!1$ the field satisfies commutation relations, where $E = E^- - E^+$ is the advanced-minus-retarded fundamental Green operator of the Klein-Gordon operator $\Box - m^2 - \xi R$. 

The discussion on how to compute the renormalised stress-energy tensor that appears on the right-hand side of \eqref{SemiEinstein} is a well-trodden path for the Klein-Gordon field. Our purpose is therefore to give a short, non-exhaustive review and we refer the reader to the classical literature \cite{Wald:1995yp} for the details.\footnote{The interested reader might also look at \cite{Moretti:2001qh, Christensen:1976vb, Wald:1978pj, Decanini:2005eg}.} The classification of renormalisation ambiguities (subject to some regularity criteria), which plays a key r\^ole for semi-classical gravity, was laid out in \cite{Hollands:2001nf, Hollands:2001fb, Khavkine:2014zsa, Hollands:2004yh}. 

The starting point is computing the two-point function in a Hadamard state, $\Psi$, which in a geodesically convex subset, $O \subset M$, in which $\sigma(\x,\x')$, the half-squared geodesic distance is well defined, takes the Hadamard form
\begin{equation}
G^+(\x,\x') = \frac{1}{8\pi^2}\left[\frac{\Delta^{1/2}(\x,\x')}{\sigma_\epsilon(\x,\x')} + v(\x,\x') \ln \left(\sigma_\epsilon(\x,\x')/\ell^2 \right) + w(\x, \x')\right].
\label{G+Had}
\end{equation}

Here, $\sigma_\epsilon(\x,\x') = \sigma(\x, \x') + 2 \ii T(\x, \x') \epsilon + \epsilon^2$ is the regularised half-squared geodesic distance, $\Delta$ is the van Vleck-Morette determinant and $v$ and $w$ are smooth bi-functions computed as a covariant Taylor series in powers of $\sigma$ through the Hadamard recursion relations, obtained by demanding that $G^+(\x,\x')$ be a solution of the Klein-Gordon equation in the $\x$-variable, provided that an intial value $w_0$ for the $O(1)$ term in the $w$-series is prescribed. The initial values of $u$ and $v$ are determined geometrically, and independent of the quantum state. The datum $w_0$ is state dependent.

The renormalised stress-energy tensor is obtained by acting with a differential two-point operator on the singularity-subtracted two-point function, the smooth bi-function defined by $G^+_{\rm reg}(\x, \x') = G^+(\x,\x') - H_\ell(\x,\x')$, where
\begin{equation}
H_\ell(\x,\x') = \frac{1}{8\pi^2}\left[\frac{\Delta^{1/2}(\x,\x')}{\sigma_\epsilon(\x,\x')} + v(\x,\x') \ln \left(\sigma_\epsilon(\x,\x')/\ell^2 \right) + w^{\rm Had}(\x, \x')\right]. \label{HadamardF}
\end{equation}
is the Hadamard parametrix, with $\ell$ a fixed length scale and $w^{\rm Had}$ as the $w$ smooth bi-function obtained from the initial value $w_0 = 0$. We define the renormalised stress-energy tensor by
\begin{subequations}
\label{TRenWald}
\begin{align}
\langle \Psi | \hat T_{ab} (\x) |\Psi \rangle   & = \lim\limits_{\x' \rightarrow \x} \mathcal{T}_{ab} \left[G^+_{\rm reg} - \frac{1}{8 \pi} g_{ab} v_1(\x, \x') \right]  + \Theta_{ab} (\x), \\ 
\mathcal{T}_{ab} &= (1-2\xi ) g_{a}\,^{b'}(\nabla_a) (\nabla_{b'} )+\left(2\xi - \frac{1}{2}\right) g_{ab}g^{cd'} (\nabla_c) (\nabla_{d'})  - \frac{1}{2} g_{ab} m^2 \nonumber\\
&\quad + 2\xi \Big[  - g_{a}\,^{a'} g_{b}\,^{b'} \nabla_{a'} \nabla_{b'} + g_{ab} g^{c d}\nabla_c \nabla_d + \frac{1}{2}G_{ab} \Big], 
\end{align}
\end{subequations}
where derivatives with primed and unprimed indeces are evaluated at the points $\x'$ and $\x$ respectively. In eq. \eqref{TRenWald}, $g_{ab'}$ denotes the bi-vector of parallel transport from $\x$ to $\x'$, with the condition $\lim_{\x' \rightarrow \x}g_{ab'}=g_{ab}$ \cite{DeWitt:1960fc}, the term $[v_1]_c(\x) = \lim_{\x' \to \x} v_1(\x, \x')$, i.e. the diagonal of the $v_1$ coefficient in the Hadamard series of $v$, is given by \cite{Wald:1978pj, Decanini:2005eg} $[v_1]_c  = \lim\limits_{\x' \rightarrow \x} v_1(\x, \x') = \frac{1}{8} m^4 + \frac{1}{4} \left(\xi - \frac{1}{6}\right) m^2 R - \frac{1}{24}\left(\xi - \frac{1}{5} \right) \Box R 
 \quad + \frac{1}{8} \left(\xi - \frac{1}{6} \right)^2 R^2 - \frac{1}{720} R_{ab} R^{ab} + \frac{1}{720} R_{abcd} R^{abcd}$, and $\Theta$ is an ambiguous, geometric, covariantly conserved, symmetric tensor of dimension of lenght to the minus fourth power, built out of the metric and its derivatives, which has been classified in \cite{Hollands:2001nf, Hollands:2001fb, Khavkine:2014zsa}. For conformally-coupled fields, $[v_1]_c$ is responsable for the trace anomaly \cite{Wald:1978pj}.

As a final word for the section, notice that $[v_1]_c$ spoils the second order, hyperbolic form of the semi-classical system \eqref{SemiGR}. This is a well-known problem in semi-classical gravity. However, as we shall see below, in the symmetry-reduced case of de Sitter spacetime, this problem does not occur.

\section{Semi-classical gravity in de Sitter spacetime}
\label{sec:deSitter}

The metric tensor for the $(3+1)$-dimensional de Sitter spacetime has the form $g = (\alpha/\eta)^2 (- d\eta^2 + dx^2+dy^2+dz^2)$, with $\alpha^2 = 3/\Lambda$. Eq. \eqref{SemiFields} has been studied in de Sitter spacetime in \cite{Bunch:1978yq} by exploiting the spatial symmetries of the problem, whereby the wave equation reduces to an ODE for the temporal part that can be brought to a Bessel equation form. The quantum fields can be concretely represented as operators in the Hilbert space, $\mathscr{H}_{\rm BD}$ as,
\begin{subequations}
\begin{align}
\hat \Phi(\eta, x) & = (2 \pi)^{-3/2} \int d^3k \left( \psi_k \ee^{\ii k \cdot x} \hat a_k + \overline{\psi_k} \ee^{-\ii k \cdot x} \hat a^*_k \right), \\
\psi_k(\eta) & = \alpha^{-1} (18 \pi)^{1/2} \eta^{3/2} H_\nu^{(2)}(k \eta), \hspace{0.5cm} \nu^2 = 9/4 - 12(m^2/R + \xi).
\end{align}
\end{subequations}

Annihilation operators annihilate $\Omega_{\rm BD} \in \mathscr{H}$, the Bunch-Davies vacuum, which is cyclic in the sense that the set of vectors obtained from acting with all operator algebra elements contained in $\mathscr{A}_{\rm KG}$ on the vacuum $\Omega_{\rm BD}$ forms a dense subspace of the Fock space $\mathscr{H}$. Creation and annihilation operators act in the usual way on Fock space elements, see e.g. \cite[App. A.3]{Wald:1995yp}.


The two-point function, which characterises the Bunch-Davies vacuum (as it is quasi-free), can then be obtained directly as a sum over modes, $G^+(\x, \x')  = (2 \pi)^{-3} \int_{\mathbb{R}^3} \! d^3 k \, \psi_k(\eta) \overline{\psi_k(\eta')} \ee^{\ii k \cdot (x - x')}$, and it admits a closed form expression in terms of hypergeometric functions
\begin{equation}
G^+(\x, \x')= \frac{\sec \left[\pi \nu (1/4 - \nu^2) \right]}{16 \pi \alpha^2} F\left[\frac{3}{2}+\nu; 2; 1 + \frac{(\eta-\eta')^2 - |r-r'|^2}{4 \eta \eta'} \right].
\label{BDGreen}
\end{equation}

We should mention that, in the algebraic approach to quantum field theory, the (algebraic) state is defined by all its $n$-point functions, as well as a normalisation and a positivity requirements. The concrete operators on a Hilbert space representations are obtained via the GNS construction. For vacuum states, all $n$-point functions are reconstructed from the two-point function -- they are quasi-free. Hence, eq. \eqref{BDGreen} can be taken as the definition of the Bunch-Davies vacuum and the starting point of quantum field theory in de Sitter spacetime, together with the {\it abstract} Klein-Gordon algebra.

In \cite{Bunch:1978yq} the point-splitting and renormalisation of the stress-energy tensor in the Bunch-Davies vacuum has been performed with the aid of DeWitt-Schwinger, rather than Hadamard expansions (cf. Sec. \ref{sec:semiGR}). The two renormalisation methods are well known to be equivalent. See in particular \cite{Decanini:2005gt} for a detailed discussion. The results reported in \cite{Bunch:1978yq} are tantamount to performing the Hadamard renormalisation at the fixed scale $\ell^2 = m^{-2}$, yielding
\begin{align}
& \langle \Omega_{\rm BD} |  \hat T_{ab} \Omega_{\rm BD} \rangle  = \frac{g_{ab}}{(8 \pi)^2} \left\{m^2\left[m^2 + \left( \xi - 1/6\right) R \right]\left[\psi\left(3/2 + \nu \right) + \psi\left(3/2 - \nu \right) \right. \right. \nonumber \\
& \left. \left. - \! \ln\left(12 m^2/R \right) \right] \! - \frac{1}{2}(\xi - 1/6)^2 R^2 \! + \frac{R^2}{2160} \! - m^2(\xi -1/6)R \! - \frac{m^2 R}{18}\right\} \! + \Theta_{ab}, \label{TabBD}
\end{align}
where $\psi$ is the digamma function, defined by $z \mapsto \psi(z) = \Gamma'(z)/\Gamma(z)$, where $\Gamma$ is the Gamma function. In ref. \cite{Bunch:1978yq} the ambiguity term, $\Theta_{ab}$, is ignored, but we have restored it, as it will play an important r\^ole in the ensuing discussion.

\subsection{Stress-energy tensor ambiguities}

The tensor $\Theta_{ab}$ has ambiguity terms coming from two sources. First, changing the natural renormalisation scale $\ell^2 = m^{-2}$ to some arbitrary scale $\ell^2 = \mu^{-2}$, one has a scale ambiguity contribution, $\Theta_{ab}^{m^2/\mu^2}$. Second, from the classification of allowed ambiguous terms discussed in \cite{Hollands:2001nf, Hollands:2001fb, Khavkine:2014zsa}, one has an additional $\Theta_{ab}^{\rm clas}$. The total contribution to the ambiguity tensor is therefore $\Theta_{ab} = \Theta_{ab}^{m^2/\mu^2} + \Theta_{ab}^{\rm clas}$.

Let us first discuss the ambiguity term $\Theta_{ab}^{m^2/\mu^2}$. As we have said, for the massive scalar field, in obtaining eq. \eqref{TabBD}, the natural choice of Hadamard parametrix has been made with $\ell^2 = m^{-2}$. Notice, that in the massless case the Hadamard parametrix contains no ambiguities, and hence the stress-energy tensor vanishes in Minkowski spacetime identically. 

We henceforth treat the renormalisation scale as an arbitrary one for the massive field, $\ell^2 = \mu^{-2}$, and subtract the Hadmard parametrix $H_{\mu^{-1}}$, cf. eq. \eqref{HadamardF}, yielding an additional term  on the right-hand side of eq. \eqref{TabBD}, cf. \cite[eq. (110)]{Decanini:2005eg}, given by
\begin{align}
\Theta_{ab}^{m^2/\mu^2} & = \frac{\ln(m^2/\mu^2)}{2(2\pi)^2}g_{ab} \left(\frac{1}{8}m^4 +\frac{1}{2}\left(\xi - \frac{1}{6} \right) m^2 \Lambda  \right).
\label{AmbiguityLambda}
\end{align}

We now dicuss the contribution $\Theta_{ab}^{\rm clas}$. In agreement with local covariance, stress-energy conservation, correct scaling properties, as well as other regularity criteria, the ambiguity term $\Theta_{ab}^{\rm clas}$ has the form $\Theta_{ab}^{\rm clas}  \! = \! \alpha_1 \left(- \frac{1}{2} g_{ab} R^2 + 2 R R_{ab}\right) +  \alpha_2 \left(2 R_{a}{}^c R_{cb} - \frac{1}{2} g_{ab} R^{cd} R_{cd} \right) + \alpha_3 m^2 G_{ab} + \alpha_4 m^4 g_{ab}$, where the $\alpha_i: \xi \mapsto \alpha_i(\xi)$, $i =\{1,2,3,4\}$, are renormalisation ambiguities that will depend on $\xi$. Trivial dependence means that the $\alpha_i$ are constants. That $\Theta_{ab}^{\rm clas}$ must be of this form is implied by the results in \cite{Hollands:2004yh}, see in particular Remark (3) towards the end of Sec. 5.1 of that reference, and the general form\footnote{Since we are interested in de Sitter spacetime, we have ommitted the terms that appear as derivatives of the curvature.} of this ambiguity had been suspected to be as shown in \cite{Hollands:2004yh} at least since the late seventies \cite{Wald:1978pj}.

In constant curvature spacetimes, the terms that accompany the coefficients $\alpha_1$ and $\alpha_2$ can be seen to vanish algebraically because $R_{abcd} = (\Lambda/3)^2 (g_{ac}g_{bd}-g_{ad}g_{cb})$. One is therefore left with
\begin{align}
\Theta_{ab} & = g_{ab} \left[ \left( \frac{\ln(m^2/\mu^2)}{(8\pi)^2} + \alpha_4 \right) m^4 + \left( \frac{\ln(m^2/\mu^2)}{(4\pi)^2}\left(\xi - \frac{1}{6} \right) - \alpha_3\right) m^2 \Lambda \right], \label{RenAmb}
\end{align}

\subsection{Wald's stress-energy renormalisation axioms}

We take the viewpoint that the stress-energy tensor for the Klein-Gordon field in de Sitter should satisfy Wald's axioms \cite{Wald:1995yp}. This will help fix certain ambiguities coefficients in $\Theta_{ab}$. The axioms are: (i) If $\Psi_1$ and $\Psi_2$ are Hadamard state vectors, then $\langle \Psi_1 |  \hat T_{ab} \Psi_1 \rangle - \langle \Psi_2 |  \hat T_{ab} \Psi_2 \rangle \in C^\infty(M)$. (ii) Local covariance. (iii) If $\Psi$ is a Hadamard state vector, $\nabla^a \langle \Psi |  \hat T_{ab} \Psi \rangle = 0$. (iv) In Minkowski spacetime $\langle \Omega_{\rm M} |  \hat T_{ab} \Omega_{\rm M} \rangle = 0$.

Axioms (i) - (iii) are satisfied by construction. Axiom (iv) can be imposed in the following way. In the $\Lambda \to 0^+$ limit, the renormalised stress-energy tensor reduces to 
\begin{equation}
0 = \langle \Omega_{\rm M} | \hat T_{ab} \Omega_{\rm M} \rangle = \lim_{\Lambda \to 0^+} \langle \Omega_{\rm BD} | \hat T_{ab} \Omega_{\rm BD} \rangle = \left( \frac{\ln(m^2/\mu^2)}{(8\pi)^2} + \alpha_4 \right) m^4 g_{ab}.
\label{MinkoAmbLimit}
\end{equation}

Hence, $\Theta_{ab}$ has the explicit form $\Theta_{ab} = - \alpha_\mu(\xi) m^2 \Lambda g_{ab}/(8\pi)^2$, where we have set $\alpha_\mu(\xi) = (8 \pi)^2 \left( \alpha_3(\xi) - \ln(m^2/\mu^2)\left(\xi - \frac{1}{6} \right)/(4\pi)^2 \right) $.

\section{Existence of solutions in de Sitter spacetime}
\label{sec:Solutions}

We now seek solutions to eq. \eqref{SemiGR}. Due to the large symmetry of the problem, the task is reduced to solving an algebraic relation. In turn, this relation will provide the admissible values for $\Lambda$ in terms of the parameters of the Klein-Gordon field theory, $m^2$ and $\xi$. 

\subsection{The massless case}

In the massless case, $m^2 = 0$, there are solutions for any $\Lambda > 0$ provided that $\xi$ takes the values $\xi_+ = 1/6 + (1080)^{-1/2}$ or $\xi_- = 1/6 - (1080)^{-1/2}$.

\subsection{The massive case}

For the massive case, set $x = m^2/(4 \Lambda)$. The solutions lie at the roots of the function $f_\xi: \mathbb{R}^+ \to \mathbb{R}$ defined by
\begin{align}
f_\xi(x) & = \psi\left(3/2 + \nu(x) \right) + \psi\left(3/2 - \nu(x) \right) - \ln\left(12 x \right) \nonumber \\
&  + \left[x + \left( \xi - 1/6\right) \right]^{-1}\left[ \frac{1-1080(\xi - 1/6)^2}{2160x} - \xi + \frac{1}{9}- \frac{\alpha_\mu(\xi)}{4}\right],
\end{align}
where $\nu$ is the complex-valued function $\nu(x) = \left(9/4 - 12(x + \xi) \right)^{1/2}$. There are three relevant cases of interest: (i) For $3/16 - \xi < x$, $\nu(x)$ is purely imaginary, (ii) for $3/16 -\xi = x$, $\nu(x) = 0$ and (iii) for $3/16 -\xi > x$, $\nu(x)$ is real.

\subsubsection{Case (i) $x \in \mathbb{R}^+ \cap (3/16-\xi, \infty) $}

In this case, we choose the square root branch such that $\nu(x) = \ii \left(12(x + \xi) - 9/4\right)^{1/2}$. Set $y = 12(x + \xi) - 9/4$, then we need to find the roots of $g_\xi(y) = g_1(y,\xi)+g_2(y,\xi)$ for $y > 0$ and $\xi < y/12 + 3/16$, where
\begin{subequations}
\begin{align}
g_1(y,\xi) &= \psi\left(\ii y^{1/2}+\frac{3}{2}\right)+\psi\left(\frac{3}{2}-\ii y^{1/2}\right) -\ln \left(-12 \xi +y+\frac{9}{4}\right) \label{g1}\\
g_2(y,\xi) &= \frac{48}{4 y+1} \left(+\frac{360 (1-3 \xi ) \xi -29}{45 (-48 \xi +4 y+9)}+\frac{1}{9} -\xi - \frac{\alpha_\mu(\xi)}{4} \right). \label{g2}
\end{align}
\end{subequations}

The course of action is to analyse the parameter space that allows for roots of $g_\xi$ to exist by examining the behaviour of the functions $g_1$ and $g_2$.

Let us begin by analysing $g_1$. At fixed $y$, $g_1$ is an increasing function of $\xi$ in the relevant domain. It follows from lemma \ref{Lem} in appendix \ref{App:A} that the function $g_1$ is bounded as follows
\begin{align}
|g_1(y, \xi)| \leq \frac{3}{y + 9/4} + \left|\ln \left(y+\frac{9}{4}\right) -\ln \left(-12 \xi +y+\frac{9}{4}\right)\right|.
\label{Ineq}
\end{align}

It follows from bound \eqref{Ineq}\footnote{In fact, it seems to us from numerical analysis that it can be shown that $\psi(z) + \psi(\overline{z}) < \ln (z \overline{z})$, but we haven't been able to provide an analytic argument.} and by studying the definition of $g_1$, eq. \eqref{g1}, that for $\xi \neq 0$ at large values of $y$, the sign of $g_1$ is ${\rm sgn}(g_1) = {\rm sgn}(\ln(y +9/4) - \ln(-12 \xi + y + 9/4))$. This implies that as $y\to \infty$, $g_1(y, \xi) = O(1/y)$ and the limit is approached logarithmically fast for $\xi \neq 0$ and polynomially fast for $\xi = 0$. 

For $g_2$, defined by eq. \eqref{g2}, the sign of the function depends on the ambiguous function $\alpha_\mu$. First, notice that $g_2$ has a pole at 
\begin{equation}
y_{\rm r}= \frac{2160 \alpha_\mu  \xi -405 \alpha_\mu +4320 \xi ^2-1140 \xi +64}{20 (9 \alpha_\mu +36 \xi -4)},
\end{equation}
and hence we exclude this point from the ensuing analysis. Also, at fixed $\xi$, $|g_2|$ is a strictly decreasing function of $y$ bounded as $|g_2(y,\xi)| \leq A/y^2 + B/y$, with $A,B \in \mathbb{R}^+$, and hence as $y \to \infty$, $g_2 = O(1/y)$, and the limit is approached polynomially fast. Further, at large $y$, $y \gg 1$, it holds that sgn$(g_2) = $sgn$(1/9 - \xi - \alpha_\mu(\xi)/4)$, and for $|\alpha_\mu| < |-4 \xi + 4/9|$, sgn$(g_2) = $sgn$(1/9 - \xi)$. In particular, in this case, $g_2 > 0$ if $ \xi < 1/9$ and $g_2 < 0$ if $\xi > 1/9$ for $y \gg 1$.

It follows from the bounds \eqref{Ineq} and the behaviour of the function $g_2$, eq. \eqref{g2}, that at fixed $\xi$ the curves defined by $y \mapsto g_1(y,\xi)$ and $y \mapsto -g_2(y,\xi)$ may intersect in the relevant domain, and hence $g_\xi$ has roots, which are in turn identified with semi-classical solutions in de Sitter spacetime.

From this point, a numerical analysis can be carried out to find solutions. In particular, for $|\alpha_\mu(\xi))| < 4/9$, the minimally coupled field, $\xi = 0$ falls in the case $g_1(y,\xi)<0$ and $g_2(y,\xi) > 0$. We illustrate this particular case with $\alpha_\mu = 0$ for concreteness. In this case, one finds that $y_{\rm root} \approx 1.46912\times 10^{13}$ with an error term $\sim 10^{14}$, with $g(y_{\rm root}) \approx -9.23706 \times 10^{-14}$. This yields in turn that
\begin{equation}
m^2 \simeq 4.89707 \times 10^{12} \Lambda, \text{ for } \xi = 0, \alpha_\mu(\xi) = 0.
\end{equation}

We wish to emphasise that different values for $\alpha_\mu$ and $\xi$ will generally yield roots that fix the cosmological constant at different values.

\subsubsection{Case (ii) $x = 3/16-\xi > 0$}

In case (ii), the curvature coupling is restricted to $\xi_c = 3/16 - x$ and we have to find the roots of
\begin{align}
f_{\xi_c}(x) & = +24 x+\frac{17}{1440 x}-\log (12 x)-\frac{8}{3} +2 \psi \left(\frac{3}{2}\right) -12 \alpha_\mu(3/16-x)
\end{align}
for $x >0$. The existence of roots for $f_{\xi_c}$ strongly depends on the ambiguous function, $\alpha_\mu$. For concreteness, we wish to explore constant-$\alpha_\mu$ solutions. Hence, we set $\alpha_\mu \in \mathbb{R}$.

First, notice that $\lim_{x \to 0^+} f_{\xi_c}(x) \to + \infty$ and $\lim_{x \to \infty} f_{\xi_c}(x) \to + \infty$. Second, analysing the first and second derivatives of $f_{\xi_c}$, one can verify that a minimum is located at $x = x_{\rm min} = 1/720 (15 + 4 \sqrt(30))$, and we have that $f(x_{\rm min}) = -12 \alpha_\mu +4 \sqrt{\frac{2}{15}}+\frac{4}{3}-2 \gamma +\log \left(\frac{1}{68} \left(4 \sqrt{30}-15\right)\right)$, where $\gamma$ is Euler's gamma. 

Hence, there exist solutions to the semi-classical equations if $12  \alpha_\mu \geq  4 \sqrt{\frac{2}{15}}+\frac{4}{3}-2 \gamma +\log \left(\frac{1}{68} \left(4 \sqrt{30}-15\right)\right)$. In the case of equality, there is exactly one solution at with $m^2 = 4 x_{\rm min} \Lambda$. Otherwise, for each value of $\alpha_\mu$ such that $12  \alpha_\mu >  4 \sqrt{\frac{2}{15}}+\frac{4}{3}-2 \gamma +\log \left(\frac{1}{68} \left(4 \sqrt{30}-15\right)\right)$, there are exactly two solutions, one with $m^2 < 4 x_{\rm min} \Lambda$ and one with $m^2 > 4 x_{\rm min} \Lambda$.

\subsubsection{Case (iii) $0< x < 3/16 - \xi$}

Set $z = -12 (x + \xi) + 9/4$ and define for $z \in \mathbb{R}^+ \cup (-12 \xi, -12 \xi + 9/4)$
\begin{align}
h(z,\xi)& = \psi\left(\frac{3}{2}-\sqrt{z}\right)+\psi\left(\frac{3}{2} + \sqrt{z}\right) -\log  \left( \frac{9}{4}-z-12 \xi \right) \nonumber \\
& + \left(\frac{1}{12} \left(\frac{9}{4}-z\right)-\frac{1}{6}\right)^{-1} \left(\frac{1-1080 \left(\xi -\frac{1}{6}\right)^2}{2160 \left(\frac{1}{12} \left(\frac{9}{4}-z\right)-\xi \right)} +\frac{1}{9} -\xi  -\frac{\alpha_\mu }{4}\right)
\end{align}

It follows from the roots and poles of the digamma function, which is holomorphic on $\mathbb{C} \backslash -\mathbb{N}_0$, and for which countably many roots exits on the negative axis between the poles, that for sufficiently large, negative $\xi$ there are several roots for $h$ that define solutions to the semi-classical problem, which can be explored numerically, and in turn fix the admissible values of $\Lambda$ in terms of $m^2$ and $\xi$. 

\section{Final remarks}
\label{sec:Remarks}

We have proposed that, when taking the semi-classical Einstein field equations coupled to quantum matter seriously, the value of the cosmological constant will be determined by the field equations in terms of the parameters of the theory -- the mass and curvature coupling in the case of the Klein-Gordon field -- up to well-known ambiguities. In this sense, the ``very-small" observed cosmological constant of the universe should be constrained by the semi-classical field equations sourced by the stress-energy tensor of the standard model of particle physics, pressumably on a FLRW background to a good approximation, similarly to what occurs in the models that we have studied.

In our Klein-Gordon model, the common folklore that the ``bare" value of the cosmological constant must have a very large contribution from quantum fields that is then cancelled by a fine-tuned counterterm to yield a small ``renormalised" cosmological constant plays no r\^ole in the calculations. Indeed, if one wishes to interpret our results in terms of bare and renormalised quantities, one could interpret semi-classical gravity as providing the renormalised $\Lambda = \Lambda_{\rm ren}$ directly. Much like the bare (formally diverging) stress-energy tensor, the bare cosmological constant plays no r\^ole in the semi-classical gravity equations. Hence, from this viewpoint, the {\it old} cosmological constant problem in its standard formulation \cite{Weinberg:1988cp} is not present, cf. the second paragraph in the Introduction. Further, this letter provides counter-evidence that one could estimate the value of $\Lambda$ to be large. As we exemplified in Sec. \ref{sec:Solutions}, for massless fields, the cosmological constant can take any positive value, while for a massive, minimally-coupled field we have the ratio $m^2/\Lambda \sim 10^{12}$.

In reaching these conclusions, Wald's renormalisation axioms for the stress-energy tensor play a crucial r\^ole. Indeed, if the fourth axiom is not imposed, the right-hand side of eq. \eqref{MinkoAmbLimit} need not vanish, and one is left with an expression that is quite familiar in the literature of the cosmological constant problem, cf. \cite[Eq. (89) and (96)]{Martin:2012bt} up to the ambiguous term proportional to $\alpha_4$. This leads to an important change in the analysis of solutions. Consider the massive case (i). In this case, $g_2$ will not asymptote to zero as $y \to \infty$, but to a constant, changing importantly our results. In any case, in view of \cite[Eq. (89) and (96)]{Martin:2012bt} and of our results, it is clear that the cosmological constant is {\it pure ambiguity} in the framework of \cite{Martin:2012bt} and that the semi-classical equations can partially fix this ambiguity.

The above observations connect the cosmological constant problem, in the sense of the vacuum energy yielding a large contribution, with the violation of Wald's stress-energy renormalisation axioms. 

We have not mentioned anything so far about the {\it new} cosmological constant problem (see e.g. \cite[Sec. 2.3]{Padilla:2015aaa}), as this is an interacting theory problem. For addressing such a matter, perhaps modern techniques of perturbation theory in curved spacetimes should be useful \cite{Rejzner}.

\section*{Acknowledgments}
The author thanks Claudio Dappiaggi for a useful email exchange, in which ref. \cite{Dappiaggi:2008mm} was pointed to the author, and Daniel Sudarsky and Jorma Louko for stimulating conversations. The author is thankful towards Igor Khavkine for a fruitful email exchange concerning renormalisation ambiguities. This work is supported by a DGAPA-UNAM Postdoctoral Fellowship.

\appendix


\section{A lemma for the digamma function}
\label{App:A}

\begin{lemma}
\label{Lem}
Let $z \in \mathbb{C}$, such that $a = \Re z > 0$ and let $b = \Im z$. The following bound holds for the digamma function, $\psi:\mathbb{C} \to \mathbb{C}$,
\begin{align}
|\psi(z) + \psi(\overline{z}) - \ln (z \overline{z})| \leq \frac{2 a}{a^2 + b^2}   
\end{align}
\end{lemma}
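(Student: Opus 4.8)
The plan is to start from Binet's integral representation of the digamma function, valid for $\Re z > 0$,
\[
\psi(z) = \ln z - \frac{1}{2z} - 2\int_0^\infty \frac{t\,\dd t}{(t^2+z^2)(\ee^{2\pi t}-1)},
\]
and to add the conjugate identity for $\bar z$. Using $\ln z + \ln \bar z = \ln(z\bar z)$ the logarithms cancel, the simple-pole terms combine as $-\tfrac{1}{2}(1/z + 1/\bar z) = -a/(a^2+b^2)$, and the two integrands merge into one real integrand via $\frac{1}{t^2+z^2} + \frac{1}{t^2+\bar z^2} = \frac{2(t^2+c)}{D(t)}$, where $c = a^2 - b^2$ and $D(t) = (t^2+c)^2 + 4a^2 b^2 = |t^2 + z^2|^2$. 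This yields
\[
\psi(z) + \psi(\bar z) - \ln(z\bar z) = -\frac{a}{a^2+b^2} - 4\int_0^\infty \frac{t}{\ee^{2\pi t}-1}\,\frac{t^2+c}{D(t)}\,\dd t .
\]

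The central exact evaluation I would invoke is $\int_0^\infty \frac{t^2+c}{D(t)}\,\dd t = \Re\!\int_0^\infty \frac{\dd t}{t^2+\bar z^2} = \Re \frac{\pi}{2\bar z} = \frac{\pi a}{2(a^2+b^2)}$, which is legitimate because $\Re \bar z = a > 0$. Setting $m(t) = t/(\ee^{2\pi t}-1)$, which satisfies $0 \le m(t) \le 1/(2\pi)$ (from $\ee^{2\pi t}-1 \ge 2\pi t$) with $m(0^+) = 1/(2\pi)$, I would write $m = \tfrac{1}{2\pi} - g$ with $g \ge 0$ and $g(0)=0$. Substituting and collapsing the constant part against the exact evaluation leaves the clean identity
\[
\psi(z)+\psi(\bar z)-\ln(z\bar z) = -\frac{2a}{a^2+b^2} + C, \qquad C := 4\int_0^\infty g(t)\,\frac{t^2+c}{D(t)}\,\dd t .
\]
The lemma is then exactly equivalent to the two-sided bound $0 \le C \le \frac{4a}{a^2+b^2}$, since $|{-\tfrac{2a}{a^2+b^2} + C}| \le \tfrac{2a}{a^2+b^2}$ holds precisely in that range.

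When $c = a^2 - b^2 \ge 0$ (that is, $\Re z \ge |\Im z|$) this is immediate: the factor $(t^2+c)/D(t)$ is nonnegative, so $C \ge 0$, and bounding $g(t) \le 1/(2\pi)$ together with the exact evaluation gives $C \le \frac{a}{a^2+b^2}$, comfortably within budget. The main obstacle is the complementary regime $c < 0$, where $t^2 + c$ changes sign at $t = \sqrt{b^2-a^2}$, so the integrand of $C$ is negative on $[0,\sqrt{b^2-a^2})$ and positive beyond; there neither $C \ge 0$ nor the upper bound is automatic. In that regime I would exploit that $g$ is nonnegative and increasing while the negative lobe of $(t^2+c)/D(t)$ sits at small $t$, where $g$ is smallest, to argue $C \ge 0$, and I would control the positive contribution using the actual decay of $g$ together with the exact value of $\int_0^\infty (t^2+c)/D\,\dd t$, rather than the crude bound $g \le 1/(2\pi)$, which is too lossy on the long interval $[0,\sqrt{b^2-a^2}]$ once $b \gg a$. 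Securing the clean constant $2$ uniformly across this sign-change region is the delicate crux of the argument.
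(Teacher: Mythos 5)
Your exact identity is correct, and so is your resolution of the regime $c = a^2-b^2 \ge 0$: Binet's formula does give $\psi(z)+\psi(\bar z)-\ln(z\bar z) = -\frac{2a}{a^2+b^2}+C$ with $C = 4\int_0^\infty g(t)\,\frac{t^2+c}{D(t)}\,\dd t$, and the lemma is indeed equivalent to the two-sided bound $0\le C\le \frac{4a}{a^2+b^2}$. However, the gap you flag in the regime $c<0$ is not a removable technical difficulty: the upper bound on $C$ is false there, and hence so is the lemma as stated. Take $z=\epsilon+\ii$ with $\epsilon\to 0^+$. Using $\psi(\ii)=\psi(1+\ii)+\ii$ and $\Re\psi(1+\ii)=-\gamma+\sum_{n\ge 1}\frac{1}{n(n^2+1)}\approx 0.0947$, the left-hand side tends to $2\Re\psi(\ii)\approx 0.189$, while $\frac{2a}{a^2+b^2}=\frac{2\epsilon}{\epsilon^2+1}\to 0$. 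The same failure is visible asymptotically: $\psi(z)=\ln z-\frac{1}{2z}-\frac{1}{12z^2}+O(z^{-4})$ gives $|\psi(z)+\psi(\bar z)-\ln(z\bar z)|\sim |a-\tfrac16|/b^2$ as $b\to\infty$ at fixed $a$, which exceeds $2a/b^2$ for every $a<1/18$. So no amount of care with the sign-changing lobe of $(t^2+c)/D(t)$ will secure the constant $2$ uniformly in $a>0$; the ``delicate crux'' you identify is unprovable.

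For comparison, the paper's own proof uses Gauss's representation $\psi(z)=\ln z+\int_0^\infty\left(\frac1t-\frac{1}{1-\ee^{-t}}\right)\ee^{-tz}\,\dd t$ (kernel valued in $[-1,-1/2]$) rather than Binet's, and it breaks at precisely the step that blocked you: it bounds $\left|\int_0^\infty f(t)\,2\ee^{-at}\cos(bt)\,\dd t\right|$ by $2\left|\int_0^\infty \ee^{-at}\cos(bt)\,\dd t\right| = \frac{2a}{a^2+b^2}$, i.e.\ it pulls $\sup|f|\le 1$ out of an integral whose factor $\cos(bt)$ changes sign. The legitimate estimate is $2\int_0^\infty\ee^{-at}|\cos(bt)|\,\dd t\approx\frac{4}{\pi a}$ for $|b|\gg a$, which is far larger, and the counterexample above shows this step cannot be repaired. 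Note also that the case the paper actually needs --- $z=\tfrac32\pm\ii\sqrt{y}$ with $y$ large, i.e.\ $a=\tfrac32$ and $c<0$ --- lies in your unproven regime, so your partial result does not rescue the application either; but there the inequality is true (asymptotically $|a-\tfrac16|/b^2=\tfrac43/b^2 < 3/b^2$), and your identity is the right starting point for a corrected lemma restricted to, say, $a\ge\tfrac12$: the lower bound $C\ge 0$ holds for all $a,b$ by monotonicity of $g$ (since $h=(t^2+c)/D$ changes sign once, from negative to positive, one has $\int_0^\infty g\,h\,\dd t\ge g(\sqrt{-c})\int_0^\infty h\,\dd t\ge 0$), and only the upper bound requires $a$ to be bounded away from zero.
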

\begin{proof}
Using the representation $\psi(z) = \ln z + \int_0^\infty \! dt \, \left(\frac{1}{t} - \frac{1}{1-\ee^{-t}} \right) \ee^{-t z} dt$, we have that $\psi(z) + \psi(\overline{z}) = \ln(z \overline{z}) + \int_0^\infty \! dt \, \left(\frac{1}{t} - \frac{1}{1-\ee^{-t}} \right) \left( \ee^{-t z} + \ee^{-t \overline{z}} \right) dt$. Rearranging $\ee^{-t z} + \ee^{-t \overline{z}} = 2 \ee^{-a t} \cos(b t)$ and using the inequalities $-1 \leq 1/t - 1/\left(1-\ee^t\right) \leq -1/2$ for $t \in \mathbb{R}^+$, one can write $\left| \int_0^\infty \! dt \, \left(\frac{1}{t} - \frac{1}{1-\ee^{-t}} \right) \left( \ee^{t z} + \ee^{t \overline{z}} \right) dt \right| \leq 2 \left| \int_0^\infty \! dt \, \ee^{-a t} \cos(b t) dt \right|$. The integral on the right-hand side can be evaluated directly, yielding
\begin{align}
 \left| \int_0^\infty \! dt \, \left(\frac{1}{t} - \frac{1}{1-\ee^{-t}} \right) \left( \ee^{t z} + \ee^{t \overline{z}} \right) dt \right| \leq \frac{2 a}{a^2 + b^2},
\end{align}
which yields the desired result.

\end{proof}

\end{document}